\documentclass[11pt]{article}

\usepackage[utf8]{inputenc}
\usepackage[T1]{fontenc}
\usepackage{amsmath, amssymb, amsthm}
\usepackage{geometry}
\usepackage{booktabs}
\usepackage{tabularx}
\usepackage{enumitem}
\usepackage{hyperref}
\usepackage{tikz}
\usetikzlibrary{arrows.meta, positioning}

\setlist[itemize]{leftmargin=*}
\setlist[enumerate]{leftmargin=*}

\newtheorem{definition}{Definition}
\newtheorem{proposition}{Proposition}

\title{A Predicate-Based Model for Computation over State Spaces}

\author{
Jaime Alexander Jim\'enez Lozano \\
Independent Researcher / QDSV Research Group \\
Colombia \\
\texttt{jaimeajl@qruba.site}
\and
Sebasti\'an Jim\'enez Giraldo \\
Universidad de los Andes \\
QDSV Research Group \\
Colombia \\
\texttt{s.jimenezg2345@uniandes.edu.co}
}

\date{}

\begin{document}

\maketitle

\begin{abstract}
Many mainstream programming interfaces represent computation procedurally, as sequences of instructions, control-flow constructs, and explicit execution steps. However, several important classes of problems are more naturally described declaratively: one specifies the set of candidate states and the condition that makes a state valid.

This paper formalizes a predicate-based abstraction for computation over state spaces. A computational problem is represented by a state space $S$ and a predicate $C:S\to\{0,1\}$. Solutions are the states that satisfy the predicate, while execution is delegated to a realization strategy for evaluating, sampling, searching, or otherwise characterizing this solution set. We introduce a minimal semantic-preservation contract that distinguishes the problem specification from backend-specific evaluators and establishes when composed predicates preserve their meaning across realizations.

The contribution is a unifying abstraction and preservation contract, rather than a new class of constraint problems or a claim that predicate evaluation is always efficient. Procedural algorithms, solvers, probabilistic methods, and quantum oracles are treated as possible realizations of the same semantic specification.

The model is related to constraint satisfaction, satisfiability, logic programming, relational query processing, model checking, high-level quantum languages, and quantum intermediate representations. Its relevance to quantum computation follows from the fact that a Boolean predicate can be materialized, when finite and efficiently representable, as a reversible or phase oracle over computational basis states. This makes the abstraction a bridge between declarative problem specification and quantum-oriented execution without requiring the problem itself to be stated as a circuit.
\end{abstract}

\section{Introduction}

A large part of practical computing is expressed operationally: a program specifies how a machine should transform input into output through a sequence of instructions. This operational view is essential and powerful, but it is not always the most natural representation of the problem being solved. In matching, assignment, reconciliation, ranking, scheduling, and many search problems, the user often knows the validity condition more directly than the procedure that should discover valid candidates.

For example, in financial reconciliation one can state that a bank transaction and an accounting record match if their amounts, dates, and references are sufficiently compatible. The condition is compact, while the procedural implementation may involve nested loops, indexing, pruning, ranking, and exception handling. Similarly, in assignment problems, the validity of a mapping from tasks to workers can be described by skill, capacity, and non-reuse constraints, while the search procedure may be classical, heuristic, probabilistic, or quantum-oriented.

This paper proposes an abstraction in which computation is represented as predicate evaluation over a state space. The problem is specified as a pair $(S,C)$, where $S$ is the set of candidate states and $C$ is a Boolean predicate over those states. The solution set is the subset of $S$ satisfying $C$. Execution is then a realization problem: one may enumerate, query, prune, sample, optimize, invoke a solver, or materialize an oracle without changing the semantic specification.

The paper does not claim that declarative computation is new. Constraint satisfaction problems (CSP), Boolean satisfiability (SAT), logic programming, relational databases, and model checking all provide established declarative or state-based perspectives. The claim is narrower: for certain classes of computational systems, it is useful to make the state-space predicate itself the primary computational object and to treat execution strategies as interchangeable realizations below that semantic layer.

This distinction is particularly important for quantum computing. Quantum programming frameworks often require the user to reason in terms of circuits, gates, or low-level primitives. Yet many quantum algorithms, including search and amplitude amplification, are naturally described in terms of marked states induced by a Boolean predicate. A predicate-based model preserves the problem definition above the circuit level; a circuit, when required by a backend, is a materialization of the predicate, not the problem itself.

\section{Contributions and Scope}

This work makes four contributions:

\begin{enumerate}
    \item It defines a compact formal model in which a problem is represented as a state space $S$ and a predicate $C:S\to\{0,1\}$.
    \item It defines predicate-faithful realization and a corresponding semantic-preservation contract between a specification and a backend evaluator.
    \item It establishes that faithful realizations preserve Boolean predicate composition and therefore preserve the solution set of composed predicates.
    \item It relates the abstraction to established declarative paradigms and to quantum oracle materialization, including reversible evaluation and phase marking.
\end{enumerate}

The scope is deliberately limited. The abstraction does not make every predicate efficient, imply a general quantum speedup, or replace existing solvers and quantum SDKs. Hardware may still require circuits; the model treats them as backend-specific realizations of a higher-level specification.

\section{Formal Model}

\subsection{State Spaces}

Let $D_1,\ldots,D_n$ be finite domains. A state space is the Cartesian product

\[
S = D_1 \times \cdots \times D_n.
\]

A state is a tuple

\[
s=(d_1,\ldots,d_n)\in S,
\]

where each $d_i\in D_i$. Countable domains may be used at the specification level, but executable instances require finite encodings, bounded restrictions, or other effective representations. This restriction is especially important for quantum and simulation backends, where states must be encoded in a finite register or finite data structure.

\begin{definition}[Predicate-based computational problem]
A predicate-based computational problem is a pair
\[
P=(S,C),
\]
where $S$ is a state space and $C:S\to\{0,1\}$ is a Boolean predicate. A state $s\in S$ is valid if and only if $C(s)=1$.
\end{definition}

The solution set is

\[
\operatorname{Sol}(P)=\{s\in S\mid C(s)=1\}.
\]

The predicate may encode constraints, scoring thresholds, matching conditions, feasibility rules, domain-specific relations, or compositions of simpler predicates.

\subsection{Predicate Composition}

Predicates can be composed using standard Boolean operators. Given predicates $C_1,C_2:S\to\{0,1\}$, one can define

\[
(C_1\land C_2)(s)=1 \iff C_1(s)=1 \text{ and } C_2(s)=1,
\]

\[
(C_1\lor C_2)(s)=1 \iff C_1(s)=1 \text{ or } C_2(s)=1,
\]

and

\[
(\neg C_1)(s)=1 \iff C_1(s)=0.
\]

Weighted or ranked decisions can also be represented by introducing a score function $F:S\to\mathbb{R}$ and a threshold $\theta$:

\[
C(s)=\mathbf{1}[F(s)\geq \theta].
\]

This allows ordinary decision models, ranking rules, and feasibility filters to be expressed as predicates while preserving the possibility of returning additional metadata such as scores, ranks, or confidence measures.

\subsection{Semantic Resolution}

The model separates three layers:

\begin{enumerate}
    \item \textbf{Specification}: the state space $S$ and predicate $C$.
    \item \textbf{Intermediate representation}: a structured representation of $S$, $C$, bindings, operators, and execution requirements.
    \item \textbf{Realization}: a concrete execution strategy, such as enumeration, solver-backed search, sampling, or quantum oracle materialization.
\end{enumerate}

The same predicate can therefore be evaluated by different backends without changing the declared problem. This separation is the central abstraction: the problem is not the circuit, the loop, or the solver invocation; those are realizations of the problem.

\subsection{Realization and Semantic Preservation}

Let $P=(S,C)$ be a predicate-based problem. A backend realization $R$ provides an evaluator

\[
\widehat{C}_R:S\to\{0,1\}
\]

together with execution evidence $E_R$, such as the backend identity, encoding, resource information, counts, probabilities, approximation bounds, or reliability indicators.

\begin{definition}[Predicate-faithful realization]
A realization $R$ is \emph{predicate-faithful} for $P=(S,C)$ if
\[
\forall s\in S,\qquad \widehat{C}_R(s)=C(s).
\]
Its realized solution set is
\[
\operatorname{Sol}_R(P)=\{s\in S\mid \widehat{C}_R(s)=1\}.
\]
\end{definition}

Exact evaluators can satisfy this condition directly. Approximate, sampled, noisy, or probabilistic realizations instead declare an evidence contract describing how their observations estimate $C$ or $\operatorname{Sol}(P)$. Such realizations should not be reported as exact merely because they share the same input specification.

\begin{proposition}[Preservation under Boolean composition]
Let $C_1,C_2:S\to\{0,1\}$ have predicate-faithful realizations $\widehat{C}_{1,R}$ and $\widehat{C}_{2,R}$. Then the pointwise realizations of $C_1\land C_2$, $C_1\lor C_2$, and $\neg C_1$ are predicate-faithful. Consequently, their realized solution sets equal the solution sets of the corresponding specifications.
\end{proposition}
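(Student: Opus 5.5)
The plan is to first make precise what ``pointwise realization'' of a composed predicate means, and then check faithfulness state by state. Given evaluators $\widehat{C}_{1,R},\widehat{C}_{2,R}:S\to\{0,1\}$, I will define the composed evaluators by applying the same Boolean operators pointwise on their outputs:
\[
\widehat{(C_1\land C_2)}_R(s)=\widehat{C}_{1,R}(s)\land\widehat{C}_{2,R}(s),\qquad
\widehat{(C_1\lor C_2)}_R(s)=\widehat{C}_{1,R}(s)\lor\widehat{C}_{2,R}(s),\qquad
\widehat{(\neg C_1)}_R(s)=\neg\,\widehat{C}_{1,R}(s).
\]
This is exactly how the composition of specifications was defined in the subsection on predicate composition, and it is the only reading under which the proposition is meaningful.

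Next I will fix an arbitrary $s\in S$ and use the definition of predicate-faithful realization. By hypothesis, $\widehat{C}_{1,R}(s)=C_1(s)$ and $\widehat{C}_{2,R}(s)=C_2(s)$. Substituting these equalities into each pointwise formula gives, for example,
\[
\widehat{(C_1\land C_2)}_R(s)=C_1(s)\land C_2(s)=(C_1\land C_2)(s).
\]
The disjunction and the negation are handled the same way. Since $s$ was arbitrary, each composed evaluator agrees with its specification on all of $S$, so each composed realization is predicate-faithful. The key fact is that $\land,\lor,\neg$ are functions $\{0,1\}^k\to\{0,1\}$, so equal inputs give equal outputs.

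For the consequence, I will note that the realized solution set is determined by the evaluator alone. Two pointwise-equal evaluators therefore yield the same set. Concretely,
\[
\operatorname{Sol}_R(S,C_1\land C_2)=\{s\mid \widehat{(C_1\land C_2)}_R(s)=1\}=\{s\mid (C_1\land C_2)(s)=1\}=\operatorname{Sol}(S,C_1\land C_2),
\]
and similarly for $\lor$ and $\neg$. As a sanity check, these sets are $\operatorname{Sol}(S,C_1)\cap\operatorname{Sol}(S,C_2)$, $\operatorname{Sol}(S,C_1)\cup\operatorname{Sol}(S,C_2)$, and $S\setminus\operatorname{Sol}(S,C_1)$, respectively.

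The main obstacle is not computational. It is making sure that ``pointwise realization'' is defined by combining the backends' output bits with the same operators, not by some independent backend procedure for the composite predicate. I will state this as part of the proof. I will also remark that the argument extends by induction to any finite Boolean formula over faithful atoms. It does not apply to approximate realizations, because there the hypothesis of exact pointwise equality fails.
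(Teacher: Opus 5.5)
Your proposal is correct and matches the paper's proof: fix $s\in S$, substitute the faithfulness equalities into the pointwise composed evaluator, use that equal Boolean inputs give equal outputs, and read off equality of the solution sets from their definitions. Your explicit definition of the pointwise composed evaluators and the set-theoretic check via $\cap$, $\cup$, and complement make precise what the paper leaves implicit, but the argument is the same.
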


\begin{proof}
For every $s\in S$, faithfulness gives $\widehat{C}_{1,R}(s)=C_1(s)$ and $\widehat{C}_{2,R}(s)=C_2(s)$. Applying the same Boolean operator to equal Boolean values preserves equality. Thus the composed evaluator agrees pointwise with the composed predicate. Equality of the solution sets follows directly from their definitions.
\end{proof}

This elementary result makes the portability claim testable: a change of backend preserves the declared problem only when the backend evaluator, or its explicitly stated approximation contract, preserves predicate meaning.

\section{Computation Semantics and Complexity}

Under the predicate-based model, execution is the process of obtaining information about $\operatorname{Sol}(P)$. Depending on the realization strategy, execution may return all satisfying states, one satisfying state, a sample, an approximate count, a ranking, a probability distribution, or evidence that no satisfying state was found under the chosen method.

For finite domains, exhaustive evaluation has cost

\[
O(|S|\cdot T_C),
\]

where $T_C$ is the cost of evaluating $C(s)$ for a single state. Since

\[
|S|=\prod_{i=1}^{n}|D_i|,
\]

the state space grows exponentially in the number of variables when the domain sizes are bounded below by a constant greater than one.

This observation prevents an overstatement of the abstraction. Predicate-based representation can make a problem clearer and more portable, but it does not remove inherent combinatorial complexity. Practical execution requires realization strategies such as:

\begin{itemize}
    \item exhaustive enumeration for small domains;
    \item indexing, pruning, and decomposition for structured domains;
    \item SAT, CSP, SMT, or database query engines when applicable;
    \item probabilistic sampling or approximate search;
    \item heuristic or optimization-based methods;
    \item quantum oracle materialization when a finite reversible representation is available.
\end{itemize}

\section{Examples}

\subsection{Pair Matching}

Let $R$ be a set of records. Candidate pairs are represented by

\[
S=R\times R.
\]

A matching predicate can be defined as

\[
C(i,p)=(i\neq p)\land (|\operatorname{amount}(i)-\operatorname{amount}(p)|\leq \tau).
\]

The solution set contains all record pairs satisfying the matching condition.

\subsection{Financial Transaction Reconciliation}

Consider two datasets: $A$, containing bank transactions, and $B$, containing accounting records. The state space is

\[
S=A\times B.
\]

A state is a candidate reconciliation pair

\[
s=(a,b),\quad a\in A,\ b\in B.
\]

A reconciliation predicate can be defined as

\[
\begin{aligned}
C(a,b)=&\ (|\operatorname{amount}(a)-\operatorname{amount}(b)|\leq \tau)\\
&\land (|\operatorname{date}(a)-\operatorname{date}(b)|\leq \delta)\\
&\land (\operatorname{reference}(a)=\operatorname{reference}(b)).
\end{aligned}
\]

The solution set is

\[
\{(a,b)\in A\times B\mid C(a,b)=1\}.
\]

This expresses reconciliation as selection over candidate pairs rather than as a procedural matching program.

\subsection{Task Assignment}

Let $T$ be a set of tasks and $W$ a set of workers. A complete assignment can be modeled as a function from tasks to workers:

\[
S=W^T.
\]

A state $s\in S$ maps each task $t\in T$ to a worker $s(t)\in W$. A feasibility predicate can be defined as

\[
\begin{aligned}
C(s)=&\ (\forall t\in T:\operatorname{skill}(s(t))\geq \operatorname{requirement}(t))\\
&\land (\forall w\in W:\operatorname{workload}_{s}(w)\leq \operatorname{capacity}(w)),
\end{aligned}
\]

where $\operatorname{workload}_{s}(w)$ is the workload assigned to worker $w$ by state $s$.

Again, the problem is stated as the set of valid assignments, independently of the search method used to find them.

\subsection{Subset Selection}

Let $x_i\in\{0,1\}$ indicate whether item $i$ is selected. The state space is

\[
S=\{0,1\}^n.
\]

A capacity predicate can be defined as

\[
C(x)=\mathbf{1}\left[\sum_{i=1}^{n} w_i x_i\leq W\right].
\]

Additional predicates can encode value thresholds, diversity constraints, or incompatibility relations.

\subsection{Scored Decisions and Ranking}

Let each state $s\in S$ have prepared features $f_1(s),\ldots,f_m(s)$ and nonnegative weights $\alpha_1,\ldots,\alpha_m$ satisfying $\sum_{j=1}^{m}\alpha_j>0$. A score can be defined as

\[
F(s)=\frac{\sum_{j=1}^{m}\alpha_j f_j(s)}{\sum_{j=1}^{m}\alpha_j}.
\]

A decision predicate is then

\[
C(s)=\mathbf{1}[F(s)\geq \theta].
\]

The predicate determines feasibility, while $F(s)$ can be preserved for ranking valid states. This is useful in applications where one needs not only to decide whether a candidate is valid but also to prioritize the best candidates.

\section{Comparison with Algorithm-Centric Representation}

Both algorithm-centric and predicate-based representations have semantics. The distinction is where the semantics is made primary. In an algorithm-centric representation, the problem is encoded into a sequence of operations. In the predicate-based representation, the problem remains explicit as a condition over candidate states, and execution is delegated to a realization strategy.

\begin{center}
\small
\begin{tabularx}{\textwidth}{@{}p{0.23\textwidth}X X@{}}
\toprule
Aspect & Algorithm-centric representation & Predicate-based representation \\
\midrule
Primary object & Procedure or program & Predicate over state space \\
Problem meaning & Encoded in control flow and data structures & Explicit as $C:S\to\{0,1\}$ \\
Solution meaning & Output of execution & State satisfying $C$ \\
Execution form & Loops, branches, calls, instructions & Evaluation, search, sampling, realization \\
Portability & Depends on implementation & Same predicate, multiple realizations \\
Verification & Program behavior analysis & Check $C(s)$ and execution evidence \\
Quantum mapping & Circuit must be designed directly & Predicate may induce an oracle \\
\bottomrule
\end{tabularx}
\end{center}

The comparison is not a rejection of procedural computation. It identifies a useful abstraction boundary: the predicate is the model; the procedure is one possible implementation.

\section{Prototype Surface Language: QIntent}

A prototype realization of this abstraction can expose a restricted declarative surface language. In the QDSV prototype, this surface is called QIntent. QIntent is Python-like in syntax, but it is not general-purpose Python: it is a constrained language for expressing domains, fields, predicates, ranking, and selected domain-specific operations that can be compiled into a structured problem representation.

For tabular data, a row-oriented intent may be written conceptually as:

\begin{verbatim}
find_rows("candidate_index")
  .where("amount_score", ">=", 900)
  .rank_by("reference_score")
  .top_k(20)
\end{verbatim}

For a finite domain, a predicate may be written as:

\begin{verbatim}
x = domain(0, 15)
find(x).where((x * 3) == 9)
\end{verbatim}

For a combinatorial matching problem over rows:

\begin{verbatim}
i = domain(0, 3)
p = domain(0, 3)
find(i, p).where(
    abs(field(i, "amount") - field(p, "amount")) <= 5,
    no_reuse=True
)
\end{verbatim}

The purpose of such a language is not to replace mathematical formalization, solvers, or quantum SDKs. Its role is to preserve predicate-level intention while allowing the runtime to choose an appropriate realization strategy. A working prototype of QDSV and QIntent has been implemented, and preliminary evaluations have exercised the same problem specifications across multiple execution backends. The architecture and experimental evaluation are reserved for subsequent publications; the present paper focuses on the foundational abstraction and preservation contract.

\section{Quantum Interpretation}

The predicate-based abstraction has a natural interpretation in quantum computation when the state space is finite and admits an injective encoding $\operatorname{enc}:S\to\{0,1\}^{q}$. Writing $|s\rangle$ as shorthand for $|\operatorname{enc}(s)\rangle$, and assuming that the encoded state space can be prepared, a uniform superposition over $S$ is

\[
|\psi\rangle=\frac{1}{\sqrt{|S|}}\sum_{s\in S}|s\rangle.
\]

A Boolean predicate $C:S\to\{0,1\}$ can be represented reversibly by an oracle

\[
U_C|s\rangle|b\rangle=|s\rangle|b\oplus C(s)\rangle,
\]

where $b$ is an ancilla bit and $\oplus$ denotes addition modulo two. It can also induce a phase oracle

\[
O_C|s\rangle=(-1)^{C(s)}|s\rangle.
\]

States satisfying $C(s)=1$ are marked states. Algorithms such as Grover search and amplitude amplification exploit this structure by increasing the probability mass on marked states under suitable conditions.

Basis states outside $\operatorname{enc}(S)$, when the encoding does not fill the register, must be excluded during state preparation or explicitly marked invalid by the realized predicate. This mapping is structural, not automatic evidence of advantage. Efficient quantum realization requires an efficient reversible implementation of $C$, an appropriate state encoding and preparation procedure, sufficient coherence, and a backend capable of executing the resulting circuit or equivalent oracle representation.

The key abstraction is that the user need not define the problem as a circuit. The predicate is the semantic object. If a backend requires a circuit, the circuit is a materialization of the predicate for that backend.

\section{Related Work}

The proposed abstraction is connected to several established areas.

\textbf{Constraint satisfaction and satisfiability.} CSP and SAT represent problems through variables and constraints, with solutions corresponding to satisfying assignments. Classic work on consistency in networks of relations and satisfiability laid foundations for much of modern constraint and combinatorial reasoning \cite{mackworth1977,cook1971,karp1972,apt2003}. The present model does not replace these frameworks. It abstracts their common structure into a general state-space predicate view and emphasizes realization portability.

\textbf{Logic programming and declarative programming.} Logic programming separates logical specification from control, a distinction famously summarized as algorithm equals logic plus control \cite{kowalski1979,lloyd1984}. The predicate-based model is aligned with this idea, but uses the state space and predicate $C:S\to\{0,1\}$ as the primary computational object rather than centering on proof search.

\textbf{Relational query processing.} Relational databases define queries declaratively over relations, and query optimizers choose execution plans independently of the user's logical query \cite{codd1970,ullman1988}. This is closely related to the separation between specification and realization. The present abstraction generalizes beyond relational algebra toward arbitrary state spaces, composed predicates, ranking, and quantum-oriented materialization.

\textbf{Model checking.} Model checking evaluates properties over state-transition systems and can be understood as reasoning about states satisfying temporal or logical properties \cite{clarke1981,clarke1999}. The proposed model is simpler and more general at the level considered here: it focuses on predicates over candidate states rather than temporal transition systems, while sharing the emphasis on explicit state spaces and verifiable properties.

\textbf{Quantum search and oracles.} Quantum algorithms frequently rely on oracles that mark valid states. Grover search and amplitude amplification provide canonical examples \cite{grover1996,brassard2002,nielsen2010}. The predicate-based abstraction directly exposes the object that such oracles evaluate, making it useful as a high-level representation above circuit construction.

\textbf{High-level quantum languages and intermediate representations.} Recent systems raise the abstraction level in different ways. Silq provides a high-level language with semantics and safe automatic uncomputation \cite{bichsel2020}; OpenQASM~3 expands circuit-level representation with control flow, timing, and multiple levels of specificity \cite{cross2022}; and QIR provides an LLVM-based intermediate layer intended to decouple source languages from target hardware. Recent work has formalized code-safety aspects of QIR and demonstrated cross-platform execution \cite{luo2023,wong2024}. Qmod captures quantum algorithmic intent while delegating implementation decisions to synthesis automation \cite{vax2025}. The present model is complementary: it places a problem-level state space and predicate before quantum-program or circuit representation and makes semantic preservation across realizations an explicit contract.

\section{Limitations}

The abstraction has practical and theoretical limitations.

First, the size of $S$ may be prohibitively large. A clear predicate does not make search easy by itself. Second, some predicates may be expensive to evaluate or difficult to encode reversibly. Third, finite encodings can distort the original problem if domains are continuous, noisy, ambiguous, or poorly prepared. Fourth, different realization strategies may return different forms of evidence: complete solution sets, approximate samples, probability distributions, rankings, or partial results. These differences must be surfaced rather than hidden.

For quantum execution, the principal limitation is oracle construction. A predicate must be converted into a reversible or phase-marking operation, and this conversion may require additional qubits, circuit depth, decomposition choices, and error-aware execution. On present-day noisy hardware, backend limitations and noise may dominate the theoretical structure of the predicate.

\section{Discussion}

The predicate-based view provides a common language for several kinds of computation: business rules, matching, scoring, combinatorial search, declarative queries, and quantum oracles. Its value lies less in replacing existing methods and more in creating an abstraction layer above them.

This is useful when users can express what constitutes a valid or preferred state more naturally than they can express an algorithm. It is also useful when the same problem should be executed through different strategies. A predicate can be validated logically, tested on a simulator, executed through a classical or hybrid search method, or materialized into an oracle-like structure when quantum execution is available.

A practical system built on this abstraction should therefore expose evidence about realization: backend used, encoding choices, search strategy, circuit or oracle materialization when applicable, solution counts or estimates, probability distributions, ranking metadata, and confidence or reliability indicators. Without such evidence, the separation between predicate and realization may obscure important computational costs.

\section{Conclusion}

This paper introduced a predicate-based model for computation over state spaces. A problem is specified as a state space $S$ and a Boolean predicate $C:S\to\{0,1\}$; solutions are the states satisfying the predicate. The model separates problem semantics from execution strategy, allowing the same formal specification to be realized through enumeration, solvers, sampling, hybrid execution, or quantum oracle materialization.

The main contribution is a reframing: computation can be treated, for a broad class of problems, as state-space predicate evaluation rather than only as a procedural sequence of operations. This reframing is consistent with existing declarative traditions while providing a direct conceptual bridge to quantum computation, where predicates naturally correspond to marked states and oracle constructions.

Future work includes formalizing richer predicate and scoring languages, extending preservation contracts to approximate and probabilistic realizations, defining the intermediate representation, studying resource bounds, and reporting the prototype's evaluations on classical simulators and quantum hardware.

\section*{Author Contributions}

Jaime Alexander Jim\'enez Lozano conceived the predicate-based computational model, developed the main QDSV abstraction, and led the formalization and writing of the manuscript.

Sebasti\'an Jim\'enez Giraldo contributed to the conceptual analysis, review of the formal model, validation of examples, discussion of the computational interpretation, and manuscript revision.


\begin{thebibliography}{99}

\bibitem{codd1970}
E. F. Codd.
\newblock A relational model of data for large shared data banks.
\newblock \emph{Communications of the ACM}, 13(6):377--387, 1970.

\bibitem{cook1971}
S. A. Cook.
\newblock The complexity of theorem-proving procedures.
\newblock In \emph{Proceedings of the Third Annual ACM Symposium on Theory of Computing}, pages 151--158, 1971.

\bibitem{karp1972}
R. M. Karp.
\newblock Reducibility among combinatorial problems.
\newblock In R. E. Miller, J. W. Thatcher, and J. D. Bohlinger, editors, \emph{Complexity of Computer Computations}, pages 85--103. Plenum Press, 1972.

\bibitem{mackworth1977}
A. K. Mackworth.
\newblock Consistency in networks of relations.
\newblock \emph{Artificial Intelligence}, 8(1):99--118, 1977.

\bibitem{kowalski1979}
R. Kowalski.
\newblock Algorithm = logic + control.
\newblock \emph{Communications of the ACM}, 22(7):424--436, 1979.

\bibitem{lloyd1984}
J. W. Lloyd.
\newblock \emph{Foundations of Logic Programming}.
\newblock Springer, 1984.

\bibitem{ullman1988}
J. D. Ullman.
\newblock \emph{Principles of Database and Knowledge-Base Systems, Volume I}.
\newblock Computer Science Press, 1988.

\bibitem{clarke1981}
E. M. Clarke and E. A. Emerson.
\newblock Design and synthesis of synchronization skeletons using branching-time temporal logic.
\newblock In \emph{Logic of Programs}, Lecture Notes in Computer Science 131, pages 52--71. Springer, 1981.

\bibitem{clarke1999}
E. M. Clarke, O. Grumberg, and D. A. Peled.
\newblock \emph{Model Checking}.
\newblock MIT Press, 1999.

\bibitem{apt2003}
K. R. Apt.
\newblock \emph{Principles of Constraint Programming}.
\newblock Cambridge University Press, 2003.

\bibitem{grover1996}
L. K. Grover.
\newblock A fast quantum mechanical algorithm for database search.
\newblock In \emph{Proceedings of the Twenty-Eighth Annual ACM Symposium on Theory of Computing}, pages 212--219, 1996.

\bibitem{brassard2002}
G. Brassard, P. H\o yer, M. Mosca, and A. Tapp.
\newblock Quantum amplitude amplification and estimation.
\newblock \emph{Contemporary Mathematics}, 305:53--74, 2002.

\bibitem{nielsen2010}
M. A. Nielsen and I. L. Chuang.
\newblock \emph{Quantum Computation and Quantum Information: 10th Anniversary Edition}.
\newblock Cambridge University Press, 2010.

\bibitem{bichsel2020}
B. Bichsel, M. Baader, T. Gehr, and M. Vechev.
\newblock Silq: A high-level quantum language with safe uncomputation and intuitive semantics.
\newblock In \emph{Proceedings of the 41st ACM SIGPLAN Conference on Programming Language Design and Implementation}, pages 286--300, 2020.

\bibitem{cross2022}
A. W. Cross, A. Javadi-Abhari, T. Alexander, N. de Beaudrap, L. S. Bishop, S. Heidel, C. A. Ryan, P. Sivarajah, J. Smolin, J. M. Gambetta, and B. R. Johnson.
\newblock OpenQASM 3: A broader and deeper quantum assembly language.
\newblock \emph{ACM Transactions on Quantum Computing}, 3(3), Article 12, 2022.

\bibitem{luo2023}
J. Luo and J. Zhao.
\newblock Formalization of quantum intermediate representations for code safety.
\newblock arXiv:2303.14500, 2023. Subsequently published in the \emph{Journal of Systems and Software}, 219:112236, 2025.

\bibitem{wong2024}
E. Wong, V. Leyton-Ortega, D. Claudino, S. R. Johnson, A. J. Adams, S. Afrose, M. Gowrishankar, A. Cabrera, and T. S. Humble.
\newblock A cross-platform execution engine for the Quantum Intermediate Representation.
\newblock arXiv:2404.14299, 2024.

\bibitem{vax2025}
M. Vax, P. Emanuel, E. Cornfeld, I. Reichental, O. Opher, O. Roth, T. Michaeli, L. Preminger, L. Gazit, A. Naveh, and Y. Naveh.
\newblock Qmod: Expressive high-level quantum modeling.
\newblock arXiv:2502.19368, 2025.

\end{thebibliography}
\end{document}